\newcommand{\R}{\mathbb R}
\newcommand{\N}{\mathcal N}
\newcommand{\C}{\mathcal C}
\newcommand{\E}{\mathcal E}
\newcommand{\secref}[1]{Sec.~\ref{#1}}
\newcommand{\To}{\longrightarrow}
\def\Vec#1{\!\!\hbox{$#1$\kern-0.38em\lower0.85em\hbox{$\vec{}\,$}}\,}%
\newcommand{\bbm}{\begin{bmatrix}}
\newcommand{\ebm}{\end{bmatrix}}
\newcommand{\bpm}{\begin{pmatrix}}
\newcommand{\epm}{\end{pmatrix}}
\newcommand{\mbb}[1]{{\mathbb{#1}}}
\newcommand{\mc}[1]{\mathcal{#1}}
\newcommand{\expect}[1]{\mbb{E}\left[#1\right]}
\newtheorem{theorem}{Theorem}
\title{Fully Decentralized Policies for Multi-Agent Systems: \\ An Information Theoretic Approach}
\author{
  Roel Dobbe\footnotemark[1],  David Fridovich-Keil\thanks{Indicates equal contribution.}  and Claire Tomlin 
  \thanks{Roel Dobbe, David Fridovich-Keil and Claire Tomlin are with the Department of Electrical Engineering and Computer Sciences, University of California, Berkeley, CA 94720, USA.
        {\tt\small [dobbe, dfk, tomlin]@eecs.berkeley.edu}.}%
}
\begin{document}

\maketitle

\begin{abstract}
Learning cooperative policies for multi-agent systems is often challenged by partial observability and a lack of coordination.
In some settings, the structure of a problem allows a distributed solution with limited communication.
Here, we consider a scenario where no communication is available, and instead we learn local policies for all agents that collectively mimic the solution to a centralized multi-agent static optimization problem.
Our main contribution is an information theoretic framework based on rate distortion theory which facilitates analysis of how well the resulting fully decentralized policies are able to reconstruct the optimal solution. Moreover, this framework provides a natural extension that addresses which nodes an agent should communicate with to improve the performance of its individual policy.

\end{abstract}

\section{Introduction}
\label{sec:introduction}



Finding optimal decentralized policies for multiple agents is often a hard problem hampered by partial observability and a lack of coordination between agents.
The distributed multi-agent problem has been approached from a variety of angles, including distributed optimization \citep{boyd_distributed_2011}, game theory \citep{aumann_cooperative_1974} and decentralized or networked partially observable Markov decision processes (POMDPs) \citep{oliehoek_decpomdps_2016, goldman_decentralized_2004, nair_networked_2005}.
In this paper, we analyze a different approach consisting of a simple learning scheme to design fully decentralized policies for all agents that collectively mimic the solution to a common optimization problem, while having no access to a global reward signal and either no or restricted access to other agents' local state. This algorithm is a generalization of that proposed in our prior work \citep{sondermeijer_regression-based_2016} related to decentralized optimal power flow (OPF). Indeed, the success of regression-based decentralization in the OPF domain motivated us to understand when and how well the method works in a more general decentralized optimal control setting.

The key contribution of this work is to view decentralization as a \emph{compression} problem, and then apply classical results from information theory to analyze performance limits.
More specifically, we treat the $i^{\text{th}}$ agent's optimal action in the centralized problem as a random variable $u_i^*$, and model its conditional dependence on the global state variables $x = (x_1,\hdots,x_n)$,
i.e. $p(u^*_i | x )$, which we assume to be stationary in time. We now restrict each agent $i$ to observe only the $i^{\text{th}}$ state variable $x_i$. Rather than solving this decentralized problem directly, we train each agent to replicate what it would have done with full information in the centralized case. That is, the vector of state variables $x$ is \textit{compressed}, and the $i^{\text{th}}$ agent must decompress $x_i$ to compute some estimate $\hat u_i \approx u_i^*$. 
In our approach, each agent learns a parameterized Markov control policy $\hat u_i = \hat \pi_i(x_i)$ via regression.
The $\hat \pi_i$ are learned from a data set containing local states $x_i$ taken from historical measurements of system state $x$ and corresponding optimal actions $u_i^*$ computed by solving an offline centralized optimization problem for each $x$.

In this context, we analyze the fundamental limits of compression. In particular, we are interested in unraveling the relationship between the dependence structure of $u_i^*$ and $x$ and the corresponding ability of an agent with partial information to approximate the optimal solution, i.e. the difference -- or \textit{distortion} -- between decentralized action $\hat u_i = \hat \pi_i(x_i)$ and $u_i^*$.
This type of relationship is well studied within the information theory literature as an instance of \emph{rate distortion theory}~\citep[Chapter 13]{cover_elements_2012}. Classical results in this field provide a means of finding a lower bound on the expected distortion as a function of the mutual information -- or \textit{rate} of communication -- between $u_i^*$ and $x_i$. 
This lower bound is valid for each specified distortion metric, and for \textit{any} arbitrary strategy of computing $\hat u_i$ from available data $x_i$.
Moreover, we are able to leverage a similar result to provide a conceptually simple algorithm for choosing a communication structure -- letting the regressor $\hat \pi_i$ depend on some other local states $x_{j \ne i}$ -- in such a way that the lower bound on expected distortion is minimized. 
As such, our method generalizes \citep{sondermeijer_regression-based_2016} and provides a novel approach for the design and analysis of regression-based decentralized optimal policies for general multi-agent systems. 
We demonstrate these results on synthetic examples, and on a real example drawn from solving OPF in electrical distribution grids. 

\section{Related Work}
\label{sec:related_work}


Decentralized control has long been studied within the system theory literature, e.g. \citep{lunze_feedback_1992, siljak_complex_2011}. 
Recently, various decomposition based techniques have been proposed for distributed optimization based on primal or dual decomposition methods, which all require iterative computation and some form of communication with either a central node \citep{boyd_distributed_2011} or neighbor-to-neighbor on a connected graph \citep{pu_inexact_2014,raffard_distributed_2004,sun_fully_2013}.
Distributed model predictive control (MPC) optimizes a networked system composed of subsystems over a time horizon, which can be decentralized (no communication) if the dynamic interconnections between subsystems are weak in order to achieve closed-loop stability as well as performance~\citep{christofides_distributed_2013}. 
The work of \citet{zeilinger_plug_2013} extended this to systems with strong coupling by employing time-varying distributed terminal set constraints, which requires neighbor-to-neighbor communication. 
Another class of methods model problems in which agents try to cooperate on a common objective without full state information as a decentralized partially observable Markov decision process (Dec-POMDP)~\citep{oliehoek_decpomdps_2016}. \citet{nair_networked_2005} introduce networked distributed POMDPs, a variant of the Dec-POMDP inspired in part by the pairwise interaction paradigm of distributed constraint optimization problems (DCOPs).


Although the specific algorithms in these works differ significantly from the regression-based decentralization scheme we consider in this paper, a larger difference is in problem formulation. As described in Sec. \ref{sec:problem_formulation}, we study a static optimization problem repeatedly solved at each time step. Much prior work, especially in optimal control (e.g. MPC) and reinforcement learning (e.g. Dec-POMDPs), poses the problem in a dynamic setting where the goal is to minimize cost over some time horizon. In the context of reinforcement learning (RL), the time horizon can be very long, leading to the well known tradeoff between exploration and exploitation; this does not appear in the static case. Additionally, many existing methods for the dynamic setting require an ongoing communication strategy between agents -- though not all, e.g. \citep{peshkin_learning_2000}. Even one-shot static problems such as DCOPs tend to require complex communication strategies, e.g. \citep{modi_adopt_2005}.


Although the mathematical formulation of our approach is rather different from prior work, the policies we compute are similar in spirit to other learning and robotic techniques that have been proposed, such as behavioral cloning \citep{sammut_automatic_1996} and apprenticeship learning \citep{abbeel_apprenticeship_2004}, which aim to let an agent learn from examples.
In addition, we see a parallel with recent work on information-theoretic bounded rationality \citep{ortega_information-theoretic_2015} which seeks to formalize decision-making with limited resources such as the time, energy, memory, and computational effort allocated for arriving at a decision. 
Our work is also related to swarm robotics \citep{brambilla_swarm_2013}, as it learns simple rules aimed to design robust, scalable and flexible collective behaviors for coordinating a large number of agents or robots.

\section{General Problem Formulation}
\label{sec:problem_formulation}


Consider a distributed multi-agent problem defined by a graph $\mathcal G = (\N, \E)$, with $\N$ denoting the nodes in the network with cardinality $|\N| = N$, and $\E$ representing the set of edges between nodes. Fig. \ref{fig:network} shows a prototypical graph of this sort.
Each node has a real-valued state vector $x_i \in \R^{\alpha_i}\,,i\in~\N$. A subset of nodes $\mathcal C \subset \mathcal N$, with cardinality $|\mathcal C| = C$, are controllable and hence are termed ``agents.'' Each of these agents has an action variable~$u_i \in \R^{\beta_i}\,,i\in\C$. 
Let $x = (x_i,\hdots,x_N)^{\top} \in \R^{\sum_{i \in \mc{N}} \alpha_i} = \mc{X}$ denote the full network state vector and $u \in \R^{\sum_{i \in \mc{C}} \beta_i} = \mc{U}$ the stacked network optimization variable. 
Physical constraints such as spatial coupling are captured through equality constraints $g(x,u) = 0$.
In addition, the system is subject to inequality constraints $h(x,u) \le 0$ that incorporate limits due to capacity, safety, robustness, etc.
We are interested in minimizing a convex scalar function $f_o(x,u)$ that encodes objectives that are to be pursued cooperatively by all agents in the network, i.e. we want to find
\begin{equation}
\label{eqn:optimization_problem}
\begin{array}{rl}
\displaystyle u^* =  \arg \min_{u}  & \quad f_o(x,u) \,, \\
\text{s.t.} &\quad  g(x,u) = 0, \quad h(x,u) \le 0.
\end{array}
\end{equation}

Note that \eqref{eqn:optimization_problem} is static in the sense that it does not consider the future evolution of the state $x$ or the corresponding future values of cost $f_o$. We apply this static problem to sequential control tasks by repeatedly solving \eqref{eqn:optimization_problem} at each time step. Note that this simplification from an explicitly dynamic problem formulation (i.e. one in which the objective function incorporates future costs) is purely for ease of exposition and for consistency with the OPF literature as in \citep{sondermeijer_regression-based_2016}. We could also consider the optimal policy which solves a dynamic optimal control or RL problem and the decentralized learning step in Sec. \ref{subsec:decentralized_learning} would remain the same. 

Since \eqref{eqn:optimization_problem} is static, applying the learned decentralized policies repeatedly over time may lead to dynamical instability. Identifying when this will and will not occur is a key challenge in verifying the regression-based decentralization method, however it is beyond the scope of this work.

\begin{figure}[t]
\centering
\begin{subfigure}{.5\textwidth}
  \centering
  \includegraphics[height=0.13\textheight]{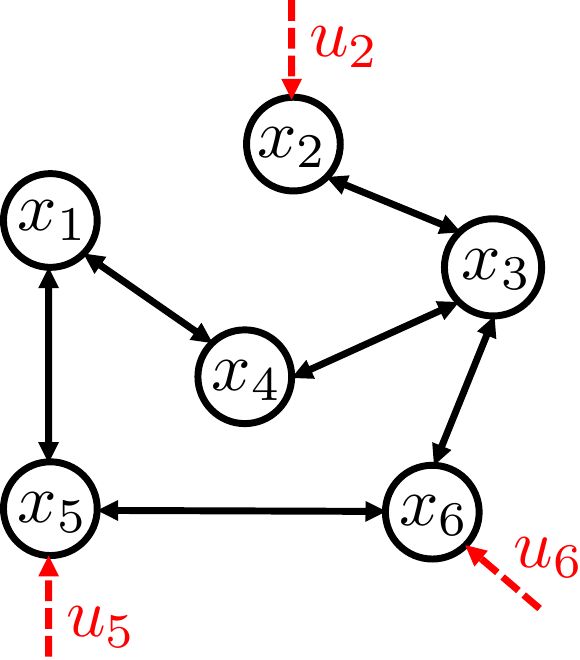}
  \caption{Distributed multi-agent problem.}
  \label{fig:network}
\end{subfigure}%
\begin{subfigure}{.5\textwidth}
  \centering
  \includegraphics[height=0.13\textheight]{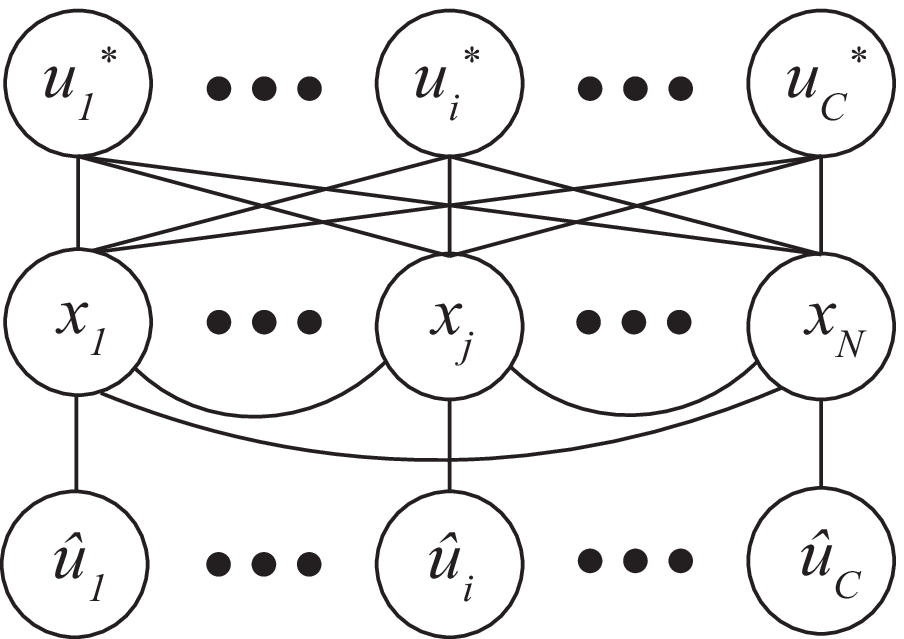}
  \caption{Graphical model of dependency structure.}
  \label{fig:graphical_model}
\end{subfigure}%
\caption{(a) shows a connected graph corresponding to a distributed multi-agent system. The circles denote the local state $x_i$ of an agent, the dashed arrow denotes its action $u_i$, and the double arrows denote the physical coupling between local state variables. (b) shows the Markov Random Field (MRF) graphical model of the dependency structure of all variables in the decentralized learning problem. Note that the state variables $x_i$ and the optimal actions $u_i^*$ form a fully connected undirected network, and the local policy $\hat u_i$ only depends on the local state $x_i$. \label{fig:overview_figs}}
\end{figure}

\subsection{Decentralized Learning}
\label{subsec:decentralized_learning}
We interpret the process of solving \eqref{eqn:optimization_problem} as applying a well-defined function or stationary Markov policy $\pi^* : \mc{X} \To \mc{U}$ that maps an input collective state $x$ to the optimal collective control or action $u^*$. We presume that this solution exists and can be computed offline.
Our objective is to learn $C$ decentralized policies $\hat u_i = \hat \pi_i (x_i)$, one for each agent $i \in \C$, based on $T$ historical measurements of the states $\{x[t]\}_{t=1}^T$ and the offline computation of the corresponding optimal actions $\{u^*[t]\}_{t=1}^T$. 
Although each policy $\hat \pi_i$ individually aims to approximate $u^*_i$ based on local state $x_i$, we are able to reason about how well their collective action can approximate $\pi^*$.
Figure~\ref{fig:flow_diagram} summarizes the decentralized learning setup.

\begin{figure}[t]
\centering
  \includegraphics[width=0.8\textwidth]{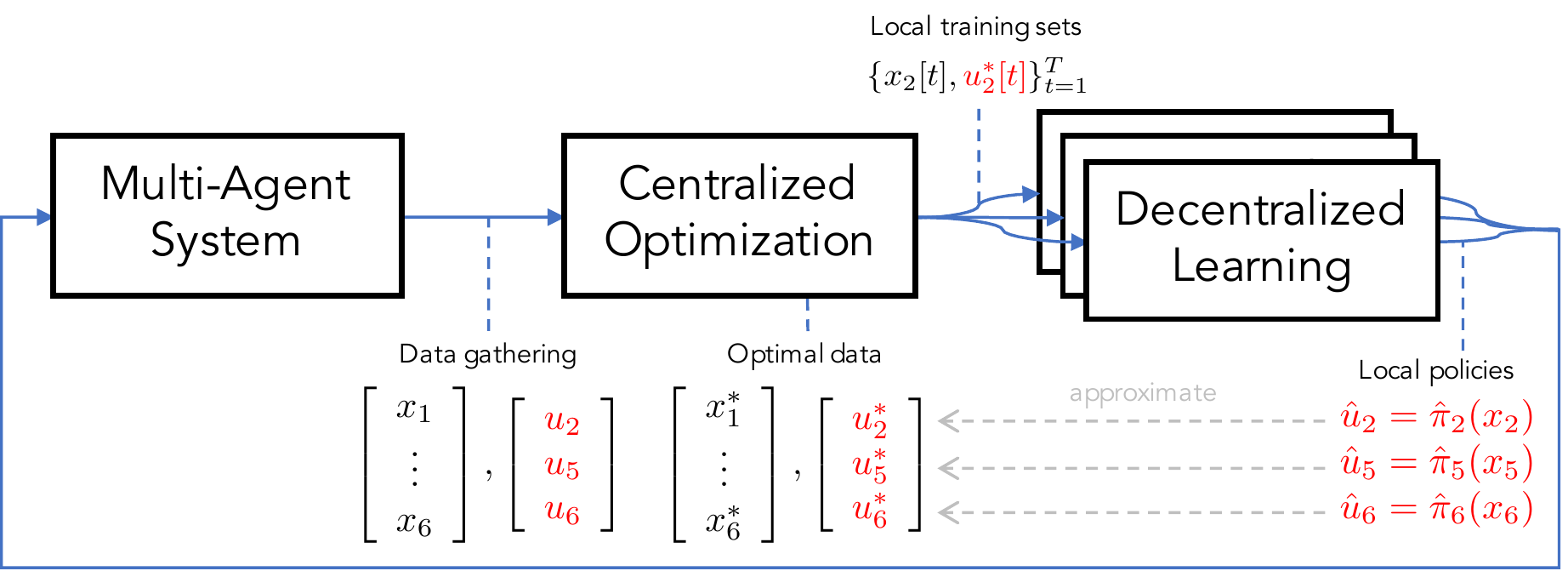}
  \caption{A flow diagram explaining the key steps of the decentralized regression method, depicted for the example system in Fig. \ref{fig:network}. We first collect data from a multi-agent system, and then solve the centralized optimization problem using all the data. The data is then split into smaller training and test sets for all agents to develop individual decentralized policies $\hat \pi_i(x_i)$ that approximate the optimal solution of the centralized problem. These policies are then implemented in the multi-agent system to collectively achieve a common global behavior.}
  \label{fig:flow_diagram}
\end{figure}

More formally, we describe the dependency structure of the individual policies $\hat \pi_i : \mathbb R^{\alpha_i} \To \mathbb R^{\beta_i}$ with a Markov Random Field (MRF) graphical model, as shown in Fig. \ref{fig:graphical_model}. The $\hat u_i$ are only allowed to depend on local state $x_i$ while the $u_i^*$ may depend on the full state $x$.
With this model, we can determine how information is distributed among different variables and what information-theoretic constraints the policies $\{\hat \pi_i\}_{i \in \C}$ are subject to when collectively trying to reconstruct the centralized policy~$\pi^*$. Note that although we may refer to $\pi^*$ as globally optimal, this is not actually required for us to reason about how closely the $\hat \pi_i$ approximate $\pi^*$. That is, our analysis holds even if \eqref{eqn:optimization_problem} is solved using approximate methods. In a dynamical reformulation of \eqref{eqn:optimization_problem}, for example, $\pi^*$ could be generated using techniques from deep RL.

\subsection{A Rate-Distortion Framework}
\label{subsec:rate_distortion_perspective}

We approach the problem of how well the decentralized policies  $\hat \pi_i$ can perform in theory from the perspective of \emph{rate distortion}.
Rate distortion theory is a sub-field of information theory which provides a framework for understanding and computing the minimal \textit{distortion} incurred by any given \textit{compression} scheme. 
In a rate distortion context, we can interpret the fact that the output of each individual policy $\hat \pi_i$ depends only on the local state $x_i$ as a compression of the full state $x$.
For a detailed overview, see~\citep[Chapter 10]{cover_elements_2012}. 
We formulate the following variant of the the classical rate distortion problem
\begin{align}
\label{eqn:rate_distortion_formulation}
D^* = \min_{p(\hat u | u^*)} \, & \quad \expect{d(\hat u, u^*)} \,, \\
\textnormal{s.t. } \, & \quad I(\hat u_i; u_j^*) \le I(x_i; u_j^*) \triangleq \gamma_{ij} \,, \nonumber\\
& \quad I(\hat u_i; \hat u_j) \le I(x_i; x_j) \triangleq \delta_{ij}, \forall i, j \in \C \,, \nonumber
\end{align}



where $I(\cdot,\cdot)$ denotes mutual information and $d(\cdot,\cdot)$ an arbitrary non-negative distortion measure. As usual, the minimum distortion between random variable $u^*$ and its reconstruction $\hat u$ may be found by minimizing over conditional distributions $p(\hat u | u^*)$.

The novelty in \eqref{eqn:rate_distortion_formulation} lies in the structure of the constraints. Typically, $D^*$ is written as a function $D(R)$, where $R$ is the maximum \textit{rate} or mutual information $I(\hat u; u^*)$. From Fig. \ref{fig:graphical_model} however, we know that pairs of reconstructed and optimal actions cannot share more information than is contained in the intermediate nodes in the graphical model, e.g. $\hat u_1$ and $u_1^*$ cannot share more information than $x_1$ and $u_1^*$. This is a simple consequence of the data processing inequality \citep[Thm. 2.8.1]{cover_elements_2012}. Similarly, the reconstructed optimal actions at two different nodes cannot be more closely related than the measurements $x_i$'s from which they are computed.
The resulting constraints are fixed by the joint distribution of the state $x$ and the optimal actions $u^*$. 
That is, they are fully determined by the structure of the optimization problem \eqref{eqn:optimization_problem} that we wish to solve.

We emphasize that we have made virtually no assumptions about the distortion function. For the remainder of this paper, we will measure distortion as the deviation between $\hat u_i$ and $u_i^*$. However, we could also define it to be the suboptimality gap $f_o(x, \hat u) - f_o(x, u^*)$, which may be much more complicated to compute. This definition could allow us to reason explicitly about the cost of decentralization, and it could address the valid concern that the optimal decentralized policy may bear no resemblance to $\pi^*$. We leave further investigation for future work.

\subsection{Example: Squared Error, Jointly Gaussian}
\label{subsec:joint_gaussian_squared_error}

To provide more intuition into the rate distortion framework, we consider an idealized example in which the $x_i, u_i \in \mbb{R}^1$. Let  $d(\hat u, u^*) = \|\hat u - u^*\|_2^2$ be the squared error distortion measure, and assume the state $x$ and optimal actions $u^*$ to be jointly Gaussian.
These assumptions allow us to derive an explicit formula for the optimal distortion $D^*$ and corresponding regression policies $\hat \pi_i$.
We begin by stating an identity for two jointly Gaussian $X, Y \in \mbb{R}$ with correlation $\rho$:
$ \, I(X; Y) \le \gamma \iff \rho^2 \le 1 - e^{-2 \gamma} \,$,
which follows immediately from the definition of mutual information and the formula for the entropy of a Gaussian random variable.
Taking $\rho_{\hat u_i, u_i^*}$ to be the correlation between $\hat u_i$ and $u_i^*$, $\sigma_{\hat u_i}^2$ and $\sigma_{u_i^*}^2$ to be the variances of $\hat u_i$ and $u_i^*$ respectively, and assuming that $u_i^*$ and $\hat u_i$ are of equal mean (unbiased policies $\hat \pi_i$), we can show that the minimum distortion attainable is
\begin{align}
D^* &= \min_{p(\hat u | u^*)} \expect{\|u^* - \hat u\|_2^2} : \rho_{\hat u_i, u_i^*}^2 \le 1 - e^{-2 \gamma_{ii}} = \rho_{u_i^*, x_i}^2, \forall i \in \mc{C} \,,\\
 &= \min_{\{\rho_{\hat u_i, u_i^*}\}, \{\sigma_{\hat u_i}\}} \sum_i \left(\sigma_{u_i^*}^2 + \sigma_{\hat u_i}^2 - 2 \rho_{\hat u_i, u_i^*} \sigma_{u_i^*} \sigma_{\hat u_i}\right) : \rho_{\hat u_i, u_i^*}^2 \le \rho_{u_i^*, x_i}^2 \,, 
 \label{eqn:solve_rho} \\
&= \min_{\{\sigma_{\hat u_i}\}} \sum_i \left(\sigma_{u_i^*}^2 + \sigma_{\hat u_i}^2 - 2 \rho_{u_i^*, x_i} \sigma_{u_i^*} \sigma_{\hat u_i}\right) \,, 
\label{eqn:solve_sigma} \\
&= \sum_i \sigma_{u_i^*}^2 (1 - \rho_{u_i^*, x_i}^2) \,.
\label{eqn:gaussian_d*}
\end{align}
In \eqref{eqn:solve_rho}, we have solved for the optimal correlations $\rho_{\hat u_i, u_i^*}$. 
Unsurprisingly, the optimal value turns out to be the maximum allowed by the mutual information constraint, i.e. $\hat u_i$ should be as correlated to $u_i^*$ as possible, and in particular as much as $u_i^*$ is correlated to $x_i$. Similarly, in \eqref{eqn:solve_sigma} we solve for the optimal $\sigma_{\hat u_i}$, with the result that at optimum, $\sigma_{\hat u_i} = \rho_{u_i^*, x_i} \sigma_{u_i^*}$. 
This means that as the correlation between the local state $x_i$ and the optimal action $u^*_i$ decreases, the variance of the estimated action $\hat u_i$ decreases as well. As a result, the learned policy will increasingly ``bet on the mean'' or ``listen less'' to its local measurement to approximate the optimal action. 

Moreover, we may also provide a closed form expression for the regressor which achieves the minimum distortion $D^*$. 
Since we have assumed that each $u_i^*$ and the state $x$ are jointly Gaussian, we may write any $u_i^*$ as an affine function of $x_i$ plus independent Gaussian noise. Thus, the minimum mean squared estimator is given by the conditional expectation
\begin{align}
\hat u_i = \hat \pi_i(x_i) &= \expect{u_i^* | x_i} = \expect{u_i^*} + \frac{\rho_{u_i^* x_i} \sigma_{u_i^*}}{\sigma_{x_i}} (x_i - \expect{x_i}) \,.
\label{eqn:linear_regressor}
\end{align}
Thus, we have found a closed form expression for the best regressor $\hat \pi_i$ to predict $u_i^*$ from only $x_i$ in the joint Gaussian case with squared error distortion. 
This result comes as a direct consequence of knowing the true parameterization of the joint distribution $p(u^*, x)$ (in this case, as a Gaussian).

\subsection{Determining Minimum Distortion in Practice}
\label{subsec:determining_min_distortion_practice}

Often in practice, we do not know the parameterization $p(u^*| x)$ and hence it may be intractable to determine $D^*$ and the corresponding decentralized policies $\hat \pi_i$.
However, if one can assume that $p(u^*| x)$ belongs to a family of parameterized functions (for instance universal function approximators such as deep neural networks), then it is theoretically possible to attain or at least approach minimum distortion for arbitrary non-negative distortion measures. 

Practically, one can compute the mutual information constraint $I(u_i^*,x_i)$ from \eqref{eqn:rate_distortion_formulation} to understand how much information a regressor $\hat \pi_i(x_i)$ has available to reconstruct $u_i^*$. 
In the Gaussian case, we were able to compute this mutual information in closed form. 
For data from general distributions however, there is often no way to compute mutual information analytically. 
Instead, we rely on access to sufficient data $\{x[t],u^*[t]\}_{t=1}^T$, in order to estimate mutual informations numerically. 
In such situations (e.g. \secref{sec:application_optimal_power_flow}), we discretize the data and then compute mutual information with a minimax risk estimator, as proposed by \citet{jiao_minimax_2014}.

\section{Allowing Restricted Communication}
\label{sec:allowing_restricted_communication}

Suppose that a decentralized policy $\hat \pi_i$ suffers from insufficient mutual information between its local measurement $x_i$ and the optimal action $u^*_i$. 
In this case, we would like to quantify the potential benefits of communicating with other nodes $j \neq i$ in order to reduce the distortion limit $D^*$ from \eqref{eqn:rate_distortion_formulation} and improve its ability to reconstruct $u_i^*$.
In this section, we present an information-theoretic solution to the problem of how to choose optimally which other data to observe, and we provide a lower bound-achieving solution for the idealized Gaussian case introduced in \secref{subsec:joint_gaussian_squared_error}.
We assume that in addition to observing its own local state $x_i$, each $\hat \pi_i$ is allowed to depend on at most $k$ other $x_{j \ne i}$. 
\begin{theorem} (Restricted Communication)
\label{thm:restricted_communication}

If $\mc{S}_i$ is the set of $k$ nodes $j \ne i \in \N$ which $\hat u_i$ is allowed to observe in addition to $x_i$, then setting 
\begin{align}
\label{eqn:restricted_communication}
\mc{S}_i = \arg \max_{\mc{S}} \ I(u_i^*; x_i, \{x_j : j \in \mc{S}\}) \ : \ |\mc{S}| = k \,,
\end{align}
minimizes the best-case expectation of \textit{any} distortion measure. That is, this choice of $\mc{S}_i$ yields the smallest lower bound $D^*$ from \eqref{eqn:rate_distortion_formulation} of any possible choice of $\mc{S}$.
\end{theorem}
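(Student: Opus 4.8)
The plan is to show that the choice of observation set $\mc{S}$ influences the distortion lower bound $D^*$ only through the \emph{rate budget} it makes available to agent $i$, and that $D^*$ is monotone in this budget; maximizing the budget then minimizes $D^*$. First I would make precise how enlarging the observation set modifies the constraints of \eqref{eqn:rate_distortion_formulation}. Writing $x_{\mc{S}} = \{x_j : j \in \mc{S}\}$ for brevity, granting agent $i$ access to $x_{\mc{S}}$ in addition to $x_i$ means $\hat u_i$ is now a (possibly stochastic) function of the augmented observation $(x_i, x_{\mc{S}})$, so that $u_i^* \to (x_i, x_{\mc{S}}) \to \hat u_i$ forms a Markov chain. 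The data processing inequality \citep[Thm.~2.8.1]{cover_elements_2012} then gives
\begin{equation*}
I(\hat u_i; u_i^*) \le I\big(u_i^*; x_i, x_{\mc{S}}\big) \triangleq \gamma_{ii}(\mc{S}),
\end{equation*}
which replaces the original self-constraint $I(\hat u_i; u_i^*) \le \gamma_{ii} = I(x_i; u_i^*)$. The same argument shows that every constraint in \eqref{eqn:rate_distortion_formulation} involving $\hat u_i$ has its right-hand side weakly increased (e.g. $I(\hat u_i; \hat u_j) \le I(x_i, x_{\mc{S}}; x_j)$), while every constraint not involving agent $i$ is left untouched.

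Next I would invoke monotonicity. The program \eqref{eqn:rate_distortion_formulation} defining $D^*$ minimizes $\expect{d(\hat u, u^*)}$ over conditional distributions $p(\hat u \mid u^*)$ subject to the mutual-information constraints. Since enlarging a right-hand side only \emph{relaxes} a constraint, the feasible set of admissible $p(\hat u \mid u^*)$ weakly grows, and therefore the optimal value $D^*$ can only decrease. Crucially, this argument concerns the feasible region alone and not the form of the objective, so it holds for \emph{any} non-negative distortion measure $d$. Equivalently, in the language of rate distortion, the reconstruction distortion for agent $i$ is governed by a rate-distortion function $D(R)$ that is non-increasing in the available rate $R$ \citep[Ch.~10]{cover_elements_2012}, with $R = \gamma_{ii}(\mc{S})$ in our setting.

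It then remains to identify which $\mc{S}$ drives $D^*$ to its smallest value. Because the distortion being bounded quantifies how well $\hat u_i$ reconstructs $u_i^*$, the governing constraint is exactly the self-term $\gamma_{ii}(\mc{S}) = I(u_i^*; x_i, x_{\mc{S}})$. Combining the monotonicity above with the fact that a larger rate budget yields a weakly smaller $D^*$, the smallest lower bound is attained by making this budget as large as possible, i.e. by choosing
\begin{equation*}
\mc{S}_i = \arg\max_{\mc{S} \,:\, |\mc{S}| = k} I\big(u_i^*; x_i, x_{\mc{S}}\big),
\end{equation*}
which is precisely the claimed selection rule.

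I expect the main obstacle to be the second step: asserting rigorously that the best achievable distortion is monotone in the single scalar $\gamma_{ii}(\mc{S})$ despite the coupled off-diagonal and $\delta_{ij}$ constraints. I would dispatch this with the feasible-set argument above, which is robust to the coupling: adding observations to agent $i$ can only enlarge the admissible set of $p(\hat u \mid u^*)$ and never shrink it, so regardless of how the other constraints respond, $D^*$ cannot increase, and it is minimized when the constraint tied to reconstructing $u_i^*$ is relaxed maximally. A secondary subtlety worth checking is \emph{achievability}: the data processing inequality only upper-bounds $I(\hat u_i; u_i^*)$, so one should confirm --- as is done explicitly in the Gaussian computation of \secref{subsec:joint_gaussian_squared_error} --- that this rate is attainable, ensuring the bound $D^*$ is tight rather than merely valid.
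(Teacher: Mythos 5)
Your proposal is correct and follows essentially the same route as the paper: the paper's proof likewise identifies $I(u_i^*; x_i, \{x_j : j \in \mc{S}\})$ with the rate $R$ and invokes the monotone decrease of the distortion-rate function $D(R)$ to conclude that maximizing this mutual information minimizes $D^*$. Your feasible-set relaxation argument is just a more explicit justification of that monotonicity, and your closing remarks on the coupled $\delta_{ij}$ constraints and on achievability flag exactly the points the paper's two-sentence proof leaves implicit.
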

\begin{proof}
By assumption, $\mc{S}_i$ maximizes the mutual information between the observed local states $\{x_i,~x_j~:~j~\in~\mc{S}_i\}$ and the optimal action $u_i^*$. This mutual information is equivalent to the notion of \textit{rate}~$R$ in the classical rate distortion theorem \citep{cover_elements_2012}. It is well-known that the distortion rate function~$D(R)$ is convex and monotone decreasing in $R$. Thus, by maximizing mutual information~$R$ we are guaranteed to minimize distortion~$D(R)$, and hence $D^*$.
\end{proof}

Theorem \ref{thm:restricted_communication} provides a means of choosing a subset of the state $\{x_j : j \ne i\}$ to communicate to each decentralized policy $\hat \pi_i$ that minimizes the corresponding best expected distortion $D^*$. Practically speaking, this result may be interpreted as formalizing the following intuition: ``the best thing to do is to transmit the most information.'' In this case, ``transmitting the most information'' corresponds to allowing $\hat \pi_i$ to observe the set $\mc{S}$ of nodes $\{x_j : j \ne i\}$ which contains the most information about~$u_i^*$. Likewise, by ``best'' we mean that $\mc{S}_i$ minimizes the best-case expected distortion~$D^*$, for any distortion metric~$d$. As in \secref{subsec:joint_gaussian_squared_error}, without making some assumption about the structure of the distribution of $x$ and $u^*$, we cannot guarantee that any particular regressor $\hat \pi_i$ will attain~$D^*$.
Nevertheless, in a practical situation where sufficient data $\{x[t],u^*[t]\}_{t=1}^T$ is available, we can solve \eqref{eqn:restricted_communication} by estimating mutual information~\citep{jiao_minimax_2014}.
%

\subsection{Example: Joint Gaussian, Squared Error with Communication}
\label{subsec:joint_gaussian_squared_error_communication}

Here, we reexamine the joint Gaussian-distributed, mean squared error distortion case from \secref{subsec:joint_gaussian_squared_error}, and apply Thm. \ref{thm:restricted_communication}.
We will take $u^* \in \mbb{R}^1, x \in \mbb{R}^{10}$ and $u^*, x$ jointly Gaussian with zero mean and arbitrary covariance. The specific covariance matrix $\Sigma$ of the joint distribution $p(u^*, x)$ is visualized in Fig. \ref{fig:synthetic_covariance}. For simplicity, we show the squared correlation coefficients of $\Sigma$ which lie in $[0, 1]$.
The boxed cells in $\Sigma$ in Fig. \ref{fig:synthetic_covariance} indicate that $x_9$ solves \eqref{eqn:restricted_communication}, i.e. $j = 9$ maximizes $I(u^*; x_1, x_j)$ the mutual information between the observed data and regression target $u^*$. Intuitively, this choice of $j$ is best because $x_9$ is highly correlated to $u^*$ and weakly correlated to $x_1$, which is already observed by $\hat u$; that is, it conveys a significant amount of information about $u^*$ that is not already conveyed by $x_1$.

\begin{figure}[t]
\centering
\begin{subfigure}{.5\textwidth}
  \centering
  \includegraphics[height=0.18\textheight]{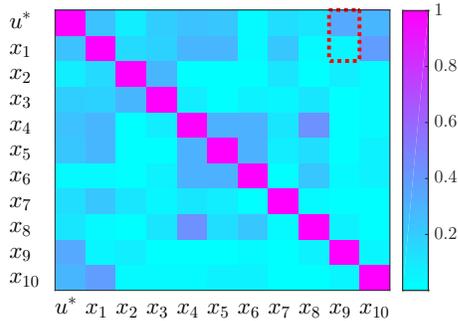}
  \caption{Squared correlation coefficients. \label{fig:synthetic_covariance}}
\end{subfigure}%
\begin{subfigure}{.5\textwidth}
  \centering
  \includegraphics[height=0.18\textheight]{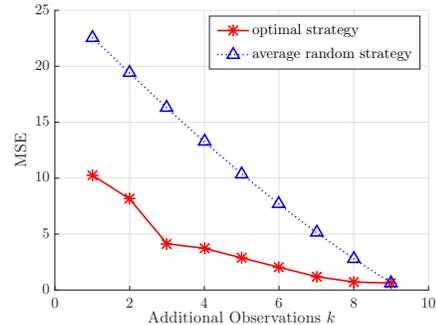}
  \caption{Comparison of communication strategies. \label{fig:comparison_communication_strategies_gaussian}}
\end{subfigure}%
\caption{Results for optimal communication strategies on a synthetic Gaussian example. (a) shows squared correlation coefficients between of $u^*$ and all $x_i$'s. The boxed entries correspond to $x_9$, which was found to be optimal for $k = 1$. (b) shows that the optimal communication strategy of Thm. \ref{thm:restricted_communication} achieves the lowest average distortion and outperforms the average over random strategies. \label{fig:results_gauss}}
\end{figure}

%
Figure \ref{fig:comparison_communication_strategies_gaussian} shows empirical results. Along the horizontal axis we increase the value of $k$, the number of additional variables $x_j$ which regressor $\hat \pi_i$ observes. The vertical axis shows the resulting average distortion. We show results for a linear regressor of the form of \eqref{eqn:linear_regressor} where we have chosen $\mc{S}_i$ optimally according to \eqref{eqn:restricted_communication}, as well as uniformly at random from all possible sets of unique indices. Note that the optimal choice of $\mc{S}_i$ yields the lowest average distortion $D^*$ for all choices of $k$. Moreover, the linear regressor of \eqref{eqn:linear_regressor} achieves $D^*$ for all $k$, since we have assumed a Gaussian joint distribution.



\section{Application to Optimal Power Flow}
\label{sec:application_optimal_power_flow}

In this case study, we aim to minimize the voltage variability in an electric grid caused by intermittent renewable energy sources and the increasing load caused by electric vehicle charging. 
We do so by controlling the reactive power output of distributed energy resources (DERs), while adhering to the physics of power flow and constraints due to energy capacity and safety.
Recently, various approaches have been proposed, such as \citep{farivar_equilibrium_2013} or \citep{zhang_optimal_2014}. 
In these methods, DERs tend to rely on an extensive communication infrastructure, either with a central master node \citep{xu_multi-timescale_2017} or between agents leveraging local computation \citep{dallanese_optimal_2014}.
We study regression-based decentralization as outlined in \secref{sec:problem_formulation} and Fig. \ref{fig:flow_diagram} to the optimal power flow (OPF) problem \citep{low_convex_2014}, as initially proposed by \citet{sondermeijer_regression-based_2016}.
We apply Thm. \ref{thm:restricted_communication} to determine the communication strategy that minimizes optimal distortion to further improve the reconstruction of the optimal actions $u_i^*$. 

Solving OPF requires a model of the electricity grid describing both topology and impedances; this is represented as a graph $\mathcal G = (\N,\E)$.
For clarity of exposition and without loss of generality, we introduce the linearized power flow equations over radial networks, also known as the \emph{LinDistFlow} equations \citep{baran_optimal_1989}:
\begin{IEEEeqnarray}{Rl}
P_{ij} =& \sum_{(j,k) \in \E, k \neq i} P_{jk} + p_{j}^\text{c} - p_{j}^\text{g} \,, \IEEEyesnumber\label{eq:BFeqs}\IEEEyessubnumber*\\
Q_{ij} =& \sum_{(j,k) \in \E, k \neq i} Q_{jk} + q_{j}^\text{c} - q_{j}^\text{g} \,,\\
v_{j} =& v_{i} - 2 \left( r_{ij} P_{ij} + \xi_{ij} Q_{ij} \right) \,
\end{IEEEeqnarray}
In this model, capitals $P_{ij}$ and $Q_{ij}$ represent real and reactive power flow on a branch from node $i$ to node $j$ for all branches $(i,j) \in \E$, lower case $p_i^c$ and $q_i^c$ are the real and reactive power consumption at node $i$, and $p_i^g$ and $q_i^g$ are its real and reactive power generation. 
Complex line impedances $r_{ij} + \sqrt{-1} \xi_{ij}$ have the same indexing as the power flows. 
The \emph{LinDistFlow} equations use the squared voltage magnitude $v_i$, defined and indexed at all nodes $i \in \N$.
These equations are included as constraints in the optimization problem to enforce that the solution adheres to laws of physics. 
%
%
%

To formulate our decentralized learning problem, we will treat $x_i \triangleq (p_i^c,q_i^c,p^g_i)$ to be the local state variable, and, for all controllable nodes, i.e. agents $i \in \C$, we have $u_i \triangleq q_i^g$, i.e. the reactive power generation can be controlled ($v_i,P_{ij},Q_{ij}$ are treated as dummy variables). 
We assume that for all nodes $i \in \N$, consumption $p^c_i \ , \ q^c_i$ and real power generation $p^g_i$ are predetermined respectively by the demand and the power generated by a potential photovoltaic (PV) system.
The action space is constrained by the reactive power capacity $\left| u_i \right| = \left| q_i^\text{g} \right| \le \bar{q}_i$.
In addition, voltages are maintained within $\pm 5\%$ of 120$V$, which is 
expressed as the constraint 
$\underline{v} \leq v_{i} \leq \overline{v} \,$. \label{eq:VoltCons} 
%
%
The OPF problem now reads
%
\begin{align}
\label{eq:OPF}
u^* = \arg \min_{q_i^g \,, \ \forall i \in \C} \hspace{8pt} & \quad \sum_{i \in \N} | v_i - v_{\text{ref}} | \,,  \\
\text{s.t.} \hspace{8pt} & \quad \eqref{eq:BFeqs} \ , \ \left| q_i^\text{g} \right| \le \bar{q}_i \ , \ \underline{v} \leq v_{i} \leq \overline{v} \,. \nonumber
\end{align}
Following Fig. \ref{fig:flow_diagram}, we employ models of real electrical distribution grids (including the IEEE Test Feeders \citep{ieee_pes_ieee_2017}), which we equip with with $T$ historical readings $\{x[t]\}_{t=1}^{T}$ of load and PV data, which is composed with real smart meter measurements sourced from \citet{pecan_street_inc._dataport_2017}.
We solve \eqref{eq:OPF} for all data, yielding a set of minimizers $\{u^*[t]\}_{t=1}^{T}$. 
We then separate the overall data set into $C$ smaller data sets $\{x_i[t], u^*_i[t]\}_{t=1}^{T} \ , \ \forall i \in \mathcal C$ and train linear policies with feature kernels $\phi_i(\cdot)$ and parameters $\theta_i$ of the form $\hat \pi_i (x_i) = \theta_i^{\top} \phi_i(x_i)$.
Practically, the challenge is to select the best feature kernel $\phi_i(\cdot)$. We extend earlier work which showed that decentralized learning for OPF can be done satisfactorily via a hybrid forward- and backward-stepwise selection algorithm \citep[Chapter 3]{friedman_elements_2001} that uses a quadratic feature kernels.

Figure \ref{fig:result_voltages} shows the result for an electric distribution grid model based on a real network from Arizona. 
This network has 129 nodes and, in simulation, 53 nodes were equipped with a controllable DER (i.e. $N = 129, C = 53$). 
In Fig. \ref{fig:result_voltages} we show the voltage deviation from a normalized setpoint on a simulated network with data not used during training. The improvement over the no-control baseline is striking, and performance is nearly identical to the optimum achieved by the centralized solution. Concretely, we observed: (i) no constraint violations, and (ii) a suboptimality deviation of 0.15\% on average, with a maximum deviation of 1.6\%, as compared to the optimal policy $\pi^*$.

In addition, we applied Thm. \ref{thm:restricted_communication} to the OPF problem for a smaller network \citep{ieee_pes_ieee_2017}, in order to determine the optimal communication strategy to minimize a squared error distortion measure. 
Fig. \ref{fig:comparing_communication_strategies_opf} shows the mean squared error distortion measure for an increasing number of observed nodes $k$ and shows how the optimal strategy outperforms an average over random strategies. 

\begin{figure}[t]
\centering
\begin{subfigure}{.5\textwidth}
  \centering
  \includegraphics[height=0.18\textheight]{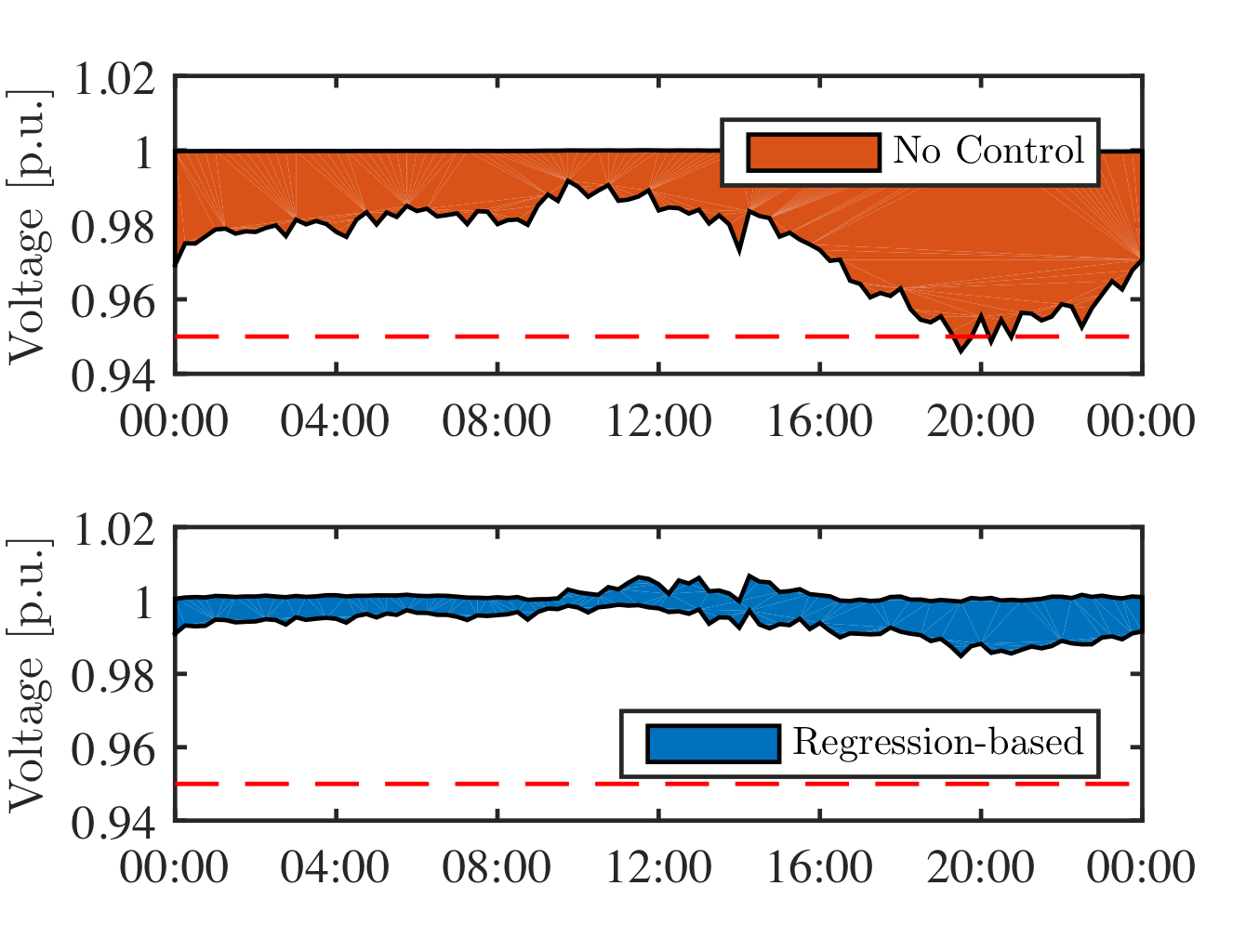}
  \caption{Voltage output with and without control. \label{fig:result_voltages}}
\end{subfigure}%
\begin{subfigure}{.5\textwidth}
  \centering
  \includegraphics[height=0.18\textheight]{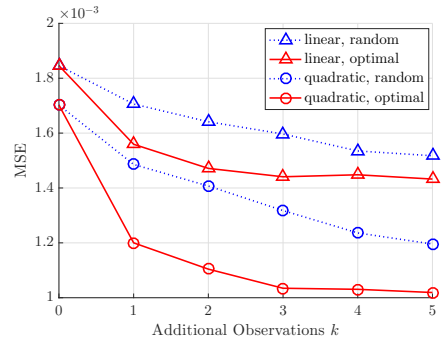}
  \caption{Comparison of OPF communication strategies. \label{fig:comparing_communication_strategies_opf}}
\end{subfigure}%
\caption{Results for decentralized learning on an OPF problem. (a) shows an example result of decentralized learning - the shaded region represents the range of all voltages in a network over a full day. As compared to no control, the fully decentralized regression-based control reduces voltage variation and prevents constraint violation (dashed line). (b) shows that the optimal communication strategy $\mathcal{S}_i$ outperforms the average for random strategies on the mean squared error distortion metric. The regressors used are stepwise linear policies~$\hat \pi_i$ with linear or quadratic features.  \label{fig:results_opf}}
\end{figure}

\section{Conclusions and Future Work}
\label{sec:discussion_future_work}

This paper generalizes the approach of \citet{sondermeijer_regression-based_2016} to solve multi-agent static optimal control problems with decentralized policies that are learned offline from historical data.
Our rate distortion framework facilitates a principled analysis of the performance of such decentralized policies and the design of optimal communication strategies to improve individual policies.
These techniques work well on a model of a sophisticated real-world OPF example.

There are still many open questions about regression-based decentralization. It is well known that strong interactions between different subsystems may lead to instability and suboptimality in decentralized control problems~\citep{davison_decentralized_1990}. There are natural extensions of our work to address dynamic control problems more explicitly, and stability analysis is a topic of ongoing work. Also, analysis of the suboptimality of regression-based decentralization should be possible within our rate distortion framework.
%
%
Finally, it is worth investigating the use of deep neural networks to parameterize both the distribution $p(u^* | x)$ and local policies $\hat \pi_i$ in more complicated decentralized control problems with arbitrary distortion measures.

\bibliographystyle{abbrvnat}
\bibliography{references_roel}

\end{document}